\providecommand{\algorithmname}{Algorithm}
\newtheorem{lemma}{Lemma}
\newtheorem{theorem}{Theorem}
\begin{document}

\title{Turning Lemons into Peaches using Secure Computation}

\author[ \hspace{-1ex}]{Stav Buchsbaum\footnote{This work was conducted while the author was with Bar-Ilan University.}}
\author[1]{Ran Gilad-Bachrach} 
\author[2]{Yehuda Lindell}
\affil[1]{Microsoft Research, Israel}
\affil[2]{Bar-Ilan University, Israel}
\maketitle
\begin{abstract}

In many cases, assessing the quality of goods is hard. For example, when purchasing a car, it is hard to measure how pollutant the car is since there are infinitely many driving conditions to be tested. Typically, these situations are considered under the umbrella of information asymmetry and as Akelrof showed may lead to a market of lemons. However, we argue that in many of these situations, the problem is not the missing information but the computational challenge of obtaining it. In a nut-shell, if verifying the value of goods requires a large amount of computation or even infinite amounts of computation, the buyer is forced to use a finite test that samples, in some sense, the quality of the goods. However, if the seller knows the test, then the seller can over-fit the test and create goods that pass the quality test despite not having the desired quality. We show different solutions to this situation including a novel approach that uses secure computation to hide the test from the seller to prevent over-fitting. 

\end{abstract}

\maketitle

\section{Introduction}

  In September 2015, The United States Environmental Protection Agency announced that Volkswagen's (VW) diesel engines for small cars have been violating the clean air act. It has been discovered that since 2008, VW have been equipping their small cars with diesel engines that were designed to operate differently when being tested, so that they would pass the emissions test while having higher emissions rates in most road conditions. Therefore, buyers of VWs diesel cars were under the impressions that they are buying a low-emission car while the car they were buying had higher emissions than regulations allow. This can be viewed as an example to the quality uncertainty problem described by \citet{akerlof1970market}. Indeed, as Akelrof predicted, the market turned into a market of lemons since it has been discovered that other car manufacturers were installing similar mechanisms that allowed them to pass the emission tests in ways that created what can be considered as a false presentation of the emission levels of the car.\footnote{\url{http://www.roadandtrack.com/new-cars/car-technology/a29293/vehicle-emissions-testing-scandal-cheating/}} For example, according to some reports, Fiat diesel engine were designed such that the emission control system would turn off after 22 minutes of driving.  Coincidentally, the German emission test lasts for 20 minutes.\footnote{\url{ https://www.autoblog.com/2016/04/25/fiat-diesels-cheat-emissions-tests}}

Our main observation is that the problem in the diesel engines market is not necessarily the lack of information on the buyers' side, but the hardness of obtaining it. From an information theoretic point of view, buyers, or government agencies for that matter, had access to the cars and hence if they would have changed their testing method they would have found that the emission were beyond the acceptable limits. Therefore, from an information theoretic point of view, the information was available to them. However, any feasible test can check the engine only on a finite set of conditions out of infinitely many driving conditions. Since the sellers knew the conditions that are being tested, they designed their engines to perform well in the test conditions while not making the same effort in conditions that were not tested. 

Therefore, we claim that we should distinguish between cases that the buyer does not have access to some information and the case that obtaining the information is computationally hard. This distinction allows us to address the latter case using computational techniques. However, before presenting possible solutions, it is worthwhile to remind ourselves that asymmetric markets have consequences. Asymmetric markets have been studied extensively since the pioneering work of \citet{akerlof1970market}. Akerlof stated that the influence on the market of these situations is greater than just the premium price the buyer pays for a defective product but also the cost associated with driving legitimate business out of the market~\cite{akerlof1970market}. To overcome the challenges of asymmetric information, signaling can be used by the party that has information that is missing by the other party if it has the incentives to do so~\cite{spence2002signaling}. At the same time, incentives can be introduced to promote signaling. 

In this work, we claim that if the asymmetric condition is due to computational constraints, it is possible to use computational techniques to alleviate the problem. We show that such conditions are abundant. We also discuss different computational solutions: We discuss methods that are currently being used and analyze their pros and cons. Finally, we show a novel solution that is suitable for the AI context that uses cryptographic tools to limit the risk of overfitting. We demonstrate the feasibility of this solution on real world data.

As far as we know, the computational aspects of asymmetric markets were not studied before. The work of \citet{glaser2014zero} is close, in some respects, to the work presented here. \citet{glaser2014zero} looked at the problem of verifying that countries follow international agreement for nuclear disarmament. The challenge is to verify that an object that is claimed to be a nuclear warhead is indeed what it is claimed to be without each country having to reveal its nuclear technology. To achieve this goal, the authors designed a clever zero knowledge proof that the parties can use to prove that an object is indeed what it is claimed to be without disclosing additional information. In the context of markets, the problem Glaser, Barak and Goldston study is that the parties would like to refrain from leaking too much information. Like \citet{glaser2014zero}, we too suggest that cryptographic tools can play an important role in solving market challenges. However, there are key differences between our work and the work of \citet{glaser2014zero}. First, their work focusses on a specific scenario (nuclear disarmament) while we present a wider view. Moreover, we argue that in many scenarios, it is not only that the seller may wish to hide some information, but even when given full access, the buyer may find it difficult to fully assess the quality of goods.

\section{Markets with hard to compute qualities}

In the introduction, we used the example of diesel engines emissions to demonstrate a market in which the quality of goods is hard to compute. In this section we argue that this situation is common. We show that the problem of such markets has been observed before in diverse fields although the computational challenge was not spelled out. We also show that with the advance of AI, these situations are not going to disappear anytime soon.

Campbell's law states that ``The more any quantitative social indicator is used for social decision making, the more subject it will be to corruption pressures and the more apt it will be to distort and corrupt the social processes it is intended to monitor'' \cite{campbell1979assessing}. This has been demonstrated in many scenarios. For example, enforcing standardized testing in the US has resulted in narrowing of the curriculum to focus on the subjects being tested \cite{berliner2009mclb}. This can be viewed as another instantiation of hard to compute information. The goal of unified tests is to verify that all students have a certain level of knowledge and skills. However, since it is infeasible to test every possible information and skill, the tests focus on a subset of these and as a result the education system narrowed its scope to the knowledge and skills that are being tested.

~\citet{goodhart1984problems} suggested a law which states that ``When a measure becomes a target, it ceases to be a good measure''. This has been demonstrated when looking at countries that adopted inflation targets for their economies \cite{mishkin1998inflation}. These countries had to be flexible about these targets or otherwise the results were not as expected. In the context of our current analysis, we argue that the problem that Goodhart is pointing to is the problem of evaluating the social welfare since there are many dimensions in which it could (and should) be evaluated. Once a single dimension is picked as a measure, the market would over-fit this dimension and emphasize the discrepancies between the measure and the intended quality target.

The common theme behind these examples is that there is a target or condition that has no efficient verification method. In the cars example, it was the emission rate which is hard to verify because of all the different settings the car can be in. In the unified tests in US schools, the challenge is to verify that students have sufficient general knowledge and skills, and in the economy example the challenge is in measuring whether the economy is doing well. Since these are hard to verify qualities, proxies are used. When this happens, the proxy become the target or put otherwise, you get what you measure. For example, VW's engineers, knowing how emissions are being measured, designed engines that would work favorably under these tests. It is worth noting that this behavior is not necessary malicious, and can be the result of the measure itself becoming the definition of quality in the minds of those involved.

The AI revolution brings with it solutions to problems that could not have been addressed before such as autonomous cars. In traditional programming tasks, the engineer designing a solution must define the challenge and a way to resolve it (algorithm). However, AI tools allow the engineer to solve the challenge by only demonstrating the desired outcome~\cite{carbonell1983overview}. This creates a great difference when trying to verify and validate solutions~\cite{ jacklin2004verification}. For example, AI based solutions do not have clear requirements and therefore cannot be validated, at least not in the standard meaning of validation in computer science.

The kind of tasks for which AI techniques are useful for are problems for which we do not have an efficient recipe for solution and the domain is large.\footnote{If the domain is small, it is possible to memorize the right responses for each input and avoid the need for sophisticated solutions.} Therefore, the nature of these tasks makes verifying the solution challenging. That is, given a solution to such task there is no finite test to ensure that the AI solution is correct as the following lemma shows

\begin{lemma} \label{le:finite test}

Let $T:X\mapsto Y$ be a task such that $X$ is infinite and $\left|Y\right| > 1$ . Let $S$ be a finite subset of $X$. Then for every measure $\mu$ over $X $ there exists a model $M$ such that $M(x)=T(x)$ for every $x \in S$ while $\mu(\{x : M(x)\neq T(x)\}) = 1-\mu(S)$. 

\end{lemma}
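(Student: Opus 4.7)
The plan is essentially constructive: I will exhibit an adversarial model $M$ that agrees with $T$ precisely on $S$ and disagrees with $T$ everywhere else. The claim then reduces to a measure-theoretic bookkeeping step.

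First, I would define $M$ piecewise. On $S$, simply set $M(x) = T(x)$, which immediately secures the condition $M(x) = T(x)$ for all $x \in S$. On the complement $X \setminus S$, I need an element of $Y$ distinct from $T(x)$; the hypothesis $|Y| > 1$ is exactly what guarantees this is always possible. To avoid any appeal to choice and to keep $M$ definable, I would fix two distinct elements $y_0, y_1 \in Y$ once and for all, and for $x \in X \setminus S$ set $M(x) = y_0$ if $T(x) \neq y_0$ and $M(x) = y_1$ otherwise. By construction $M(x) \neq T(x)$ on $X \setminus S$.

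Second, I would identify the disagreement set exactly. Since $M$ agrees with $T$ on $S$ and disagrees on $X \setminus S$, we have $\{x : M(x) \neq T(x)\} = X \setminus S$. Assuming $\mu$ is a probability measure on $X$ (as implicit in writing $1 - \mu(S)$), and noting that finite sets are measurable in any reasonable $\sigma$-algebra containing singletons, this yields $\mu(\{x : M(x) \neq T(x)\}) = \mu(X) - \mu(S) = 1 - \mu(S)$, which is the claimed identity.

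There is really no serious obstacle here: the lemma is a pigeonhole-style observation that a finite sample cannot pin down a function on an infinite domain when the codomain has at least two elements. The only point worth being careful about is the measurability of $M$ and of the set $S$, which I would address by the fixed-$y_0,y_1$ construction above and by the standing assumption that $S$, being finite, lies in the $\sigma$-algebra on which $\mu$ is defined. The lemma then highlights the core motivation of the paper: any fixed finite test $S$ admits an overfit model that passes the test while matching $T$ nowhere else of positive measure.
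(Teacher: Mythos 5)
Your proposal is correct and follows essentially the same route as the paper: define $M$ to agree with $T$ on $S$ and disagree with $T$ on $X \setminus S$ (the paper writes this as $M(x) = \tilde{T}(x)$ for $x \notin S$, where $\tilde{y}$ denotes some element of $Y$ distinct from $y$), so the disagreement set is exactly $X \setminus S$ with measure $1 - \mu(S)$. Your explicit $y_0, y_1$ device is a minor, slightly more careful refinement of the paper's ``pick some $y' \neq y$'' step, but it is the same argument.
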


Lemma~\ref{le:finite test} shows that for every finite test and every distribution on the real world, there may be a solution that fits perfectly the test but performs poorly in the real world provided that the distribution on the real world is not limited to the test set.

\begin{proof} 

Let $T$ be the task and $S$ be the test. Define $M$ such that:
\[
\begin{cases}
M\left(x\right)=T\left(x\right) & \text{if }x\in S\\
M\left(x\right)=\tilde{T}\left(x\right) & \text{if }x\notin S
\end{cases}
\]
 where we define $\tilde y$ to be $y^\prime \in Y $ such that $y^\prime \neq y$. 

From the definition of $M$ it follows that $M(x)=T(x)$ for every $x\in S$ and $M(x)\neq T(x)$ for every $x \notin S$ and therefore for any measure $\mu$ on $X$ we have that $\mu(\{x : M(x)\neq T(x)\}) = 1-\mu(S)$.

 \end{proof}

This result may look similar to many results in statistical learning theory. Indeed, using standard statistical tools, it is possible to show that the performance of a model on a test set represents well the performance in the real world. However, this is under the assumption that the model under investigation is independent of the test set. But, in Lemma~\ref{le:finite test}, we do not make such an assumption since, for example, there is no reason to believe that VW developed its engines while ignoring the way emissions will be tested. Therefore, in statistical learning theory the typical case is analyzed while here we look at the worst case.

The consequences of Lemma~\ref{le:finite test} are far reaching when put in the context of AI. Since the conditions of the Lemma hold for almost any AI task, verifying that an AI tool fulfils a certain criterion is a prime suspect for over-fitting. This applies to autonomous cars that are built of many AI components such as pedestrian detection mechanisms, obstacle avoidance, and route planning. But it also applies to medical devices such as robotic surgeons, smart medical monitoring devices, and heart defibrillators~\cite{li2014ventricular}.

It is important to stress again the difference between the context we are talking about here and the standard validation/testing scenario in machine learning. In the machine learning context there are two main approaches to mitigate this issue. One uses the uniform convergence theorem, such as the VC theory~\cite{vapnik1974theory}. Uniform convergence guarantee that if the model was selected from a set of models that has low complexity, as measured for example by its VC dimension, then the performance of the model on a randomly selected train set is a good estimate of its performance. However, this solution may not be adequate in the context of contracts and regulations. The problem with this approach is that it requires not only testing the model but also the entire process that was used to develop it. Therefore, if a self-driving-car manufacturer would like to license its pedestrian recognition mechanism, it would be insufficient to test the mechanism itself and instead, it will be required to verify that during the entire development process, the manufacturer never tried a solution that is outside of a given class of solutions.

The other approach is to use a test set that is unknown to the developer of the model. This is, however, challenging in the context of contracts and regulations. This is because it not only means that the test cannot be specified in the contract or the regulation, it also means that there need to be mechanisms in place to prevent it from leaking. This is the subject of this paper.

\section{Verifying inefficiently verified conditions}

Lemma~\ref{le:finite test} shows that when assessing the quality of complex goods,  any finite test is potentially over-fitted unless it is kept secret from the model creator. Therefore, in this section we discuss methods to keep the model from learning the test. We focus our attention on AI settings in which processes can be automated. We focus on three techniques: random tests, one time tests, and secure tests.

\subsection{Random Tests}

One possible solution is to generate a fresh random test every time a model is to be verified. This prevents the risk of fitting the test set since the test set is generated only after the model has been submitted for verification. However, it is expensive since generating the test set may be a costly process. For example, consider the case of pedestrian detection then every test scenario must be created and manually annotated.

\subsection{One Time Tests}

The overhead of random tests can be reduced by allowing periodic tests only. Say, for example, that models can only be tested once a year on a pre-announced day. In this case, the leakage of the test is minimal and can be controlled with standard statistical tools such as Bonferroni correction to handle the risk associated with many models being tested simultaneously.  This method is commonly used in schools. For example, college candidates can take the SAT exams only on 7 days during the year.  All students take it at the same time to prevent the test from leaking. Therefore, each test form is used only once for all the students that are taking it on this specific date. This allows the college board to test 1.8 million prospective students using only 7 forms during 2017.\footnote{\url{https://reports.collegeboard.org/sat-suite-program-results/class-2017-results }}
\footnote{\url{}}

This method is efficient and is commonly used when testing human skills. It is used in school, colleges, bar examinations, and more. However, it has limitations. For example, such tests can only be offered a few times a year since each additional date the test is given at requires generating a test from scratch.

\subsection{Secure Tests}

Secure tests are conducted in a way that prevents information about the test from leaking and therefore allows reusing the same test. For example, emission tests can be conducted behind closed doors such that car manufactures would not be able to see how engines are being tested. However, when evaluating AI solutions, the seller might be reluctant to give uncontrolled access to its models due to various reasons such as intellectual property. Another problem with providing the buyer with access to the models is that the buyer might over-fit the model to lower the price or even copy it and use it without proper payment. That is, the buyer might design a test that will give a low score to the model, so that according to a contractual agreement between them, the price for the goods will be reduced. Another concern is that by running the test too many times, information about it might leak by comparing the properties of goods and the scores they obtain on the test. Therefore, we need mechanisms that prevent data leakage during the test and mechanisms to prevent data leakage after the test.

\subsection{Preventing leakage during test}\label{sec:prevent_with_mpc}

Let $T$ be a test and $M$ be a model to be tested. Let $\mbox{Eval}$ be the evaluation function such that $\mbox{Eval}(T, M)$ returns the test score of model $M$ on the test $T$. We would like to compute this score without revealing $T$ or $M$. This can be achieved using Secure Multi-Party Computation (MPC) techniques~\cite{yao1982protocols,gmw,G04}. Recent advances in MPC allow now for high throughput computation of complex and arbitrary functions that can be expressed by a circuit~\cite{ araki2016high,ABFLLNOWW17,LN17}. These protocols allow the computation of the score without revealing intermediate results to the parties involved under various trust assumptions. Some protocols assume that the parties are honest but curious which means that they would follow the protocol but would use the pieces of information they gather to learn about other parties. Other models assume that parties do not have to follow the protocol, this is referred to as the malicious setting \cite{G04}. Where possible, it is preferable to use protocols that are secure in the presence of malicious adversaries.

The main advantage for using MPC for evaluation is that the model developer (seller) does not have access to the test and therefore, it cannot over-fit the test. At the same time, the test provider (buyer) does not have access to the model and therefore cannot over-fit the model to lower its evaluation. Note that this is not to say that the buyer cannot design a ``hard'' test. Rather, it means that the test cannot be tailored to fail a specific offer and in that sense it is fair. Methods to incentivize the seller to conduct a ``truthful'' test or ways to measure that the test is truthful is a subject for further study.  

\subsection{Preventing leakage after test}

The main promise of conducting secure tests using MPC as described above is that the test can be reused. However, although the computation is secured, the result of such computation is revealed and therefore, a seller, or a group of sellers, can collect a set of (model, scores) pairs and use it to learn about the test. This can be mitigated using methods of adaptive data analysis~\cite{dwork2015reusable}. The Threshold algorithm (Algorithm~\ref{alg:Threshold}) is the mechanism we propose for this task. To protect the tests from being overfitted, it is necessary to add some randomness \cite{dwork2014algorithmic} and therefore if the model's score is borderline with respect to the threshold, the Threshold mechanism might fail to detect the exact side of the threshold the model is at. However, it allows creating $n$ test cases and using them to test $\tilde{O}(n^2)$ different models. It is possible to improve on that if it is assumed that most models will fail in which case it may be possible to reuse the same tests  more. For the sake of clarity we skip this discussion here. 

\begin{theorem}\label{thm:DP}

Assume that there is a universe $\mathcal{T}$ of tests such that
$t\in\mathcal{T}$ is a function $t~:~\mathbb{M}\mapsto[0,1]$ where
$\mathbb{M}$ is the space of all possible models. Furthermore, assume
that there exists a probability measure $\mu$ on $\mathcal{T}$ and
the goal is to assess, for a model $M$ whether $E_{t\sim\mu}[t(M)]\geq\rho$.
Let $\alpha,\beta>0$ and select $\epsilon,\delta,$ and $n$ such
that

\begin{align*}
\epsilon & =\frac{\alpha}{13}\\
\beta & =\frac{8\delta}{\epsilon^{5}}\left(\ln\frac{8}{\delta}\right)^{2}\ln\left(\frac{2}{\epsilon}\right)\\
n & =\frac{2}{\epsilon^{2}}\ln\left(\frac{8}{\delta}\right)
\end{align*}

If $n$ tests $t_{1},\ldots,t_{n}\in\mathcal{T}$ are selected at
random from $\mu^{n}$ and are used to measure $k$ models $M_{1},\ldots,M_{k}$
such that $k=O\left(\nicefrac{n^{2}}{\left(\log n\right)^{2}}\right)$
through the Threshold$\left(\epsilon,\delta,k,\rho\right)$ mechanism
(Algorithm~\ref{alg:Threshold}) then with probability $1-2\beta$ (over
the internal randomness of the mechanism and the selection of $t_{1},\ldots,t_{n}$)
the mechanism will return ``pass'' for every $M_{i}$ such that $E_{t\sim\mu}[t(M_{i})]\geq\rho+2\alpha$
and ``fail'' for every $M_{i}$ such that $E_{t\sim\mu}[t(M_{i})]\leq\rho-2\alpha$.

\end{theorem}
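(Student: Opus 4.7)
The plan is to view the Threshold mechanism as a classical reusable-holdout / sparse-vector instance and to stitch together two independent error bounds: a one-shot concentration bound that controls how well the empirical average on $t_1,\ldots,t_n$ tracks the true mean $E_{t\sim\mu}[t(M)]$, and a transfer/generalization bound from adaptive data analysis that controls how much the $k$ adaptively chosen model-queries $M_1,\ldots,M_k$ can ``chase'' the particular sample $t_1,\ldots,t_n$. The choice $n=(2/\epsilon^{2})\ln(8/\delta)$ is precisely Hoeffding applied to a single $t(M)\in[0,1]$, and the shape of $\beta$ and $k$ matches the standard transfer theorem for $(\epsilon,\delta)$-DP mechanisms answering threshold queries. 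The margin of $2\alpha$, together with $\epsilon=\alpha/13$, is what absorbs the three separate sources of slack (sampling error, privacy noise, and adaptive-generalization error) with room to spare.

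Concretely, I would proceed in four steps. First, I would specify the Threshold mechanism as an instance of AboveThreshold/SparseVector that, on input $M_i$, compares the empirical average $\hat t(M_i)=\frac{1}{n}\sum_{j} t_j(M_i)$ to $\rho$ after Laplace-type noise is added to both the queries and the threshold; by the standard analysis (Dwork--Roth), this is $(\epsilon,\delta)$-differentially private with respect to the sample $t_1,\ldots,t_n$ for the full adaptive sequence of $k$ queries, provided $k=O(n^{2}/\log^{2} n)$. Second, I would invoke the transfer theorem for $(\epsilon,\delta)$-DP: with probability at least $1-\beta$ over the sample and the mechanism's coins, every answer the mechanism ever produces is within $O(\epsilon)$ of the corresponding population quantity $E_{t\sim\mu}[t(M_i)]$, where the failure probability $\beta=(8\delta/\epsilon^{5})(\ln(8/\delta))^{2}\ln(2/\epsilon)$ is exactly the bound that emerges when one plugs the given $\epsilon,\delta,n$ into the transfer lemma. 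Third, as a sanity check and to handle the non-adaptive part of the guarantee, I would apply Hoeffding to bound $|\hat t(M)-E_{t\sim\mu}[t(M)]|\leq\epsilon$ uniformly over the (oblivious) calibration of the threshold, paying an additional $\beta$ via a union bound; this is where the formula for $n$ comes from. Fourth, I would add up the slacks: the noisy comparison introduces at most some constant multiple of $\epsilon$ of error, the DP generalization introduces $O(\epsilon)$, and the sampling introduces $\epsilon$, so as long as the true mean differs from $\rho$ by at least $2\alpha=26\epsilon$, the mechanism's sign is correct for every $M_i$ simultaneously.

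The main obstacle is step two: carefully instantiating the DP transfer theorem with the right constants so that the quoted $\beta$ comes out exactly as stated, and verifying that the Threshold mechanism (whose code is in Algorithm~\ref{alg:Threshold}) indeed satisfies $(\epsilon,\delta)$-DP over the full sequence of up to $k=O(n^{2}/\log^{2} n)$ adaptive queries rather than only over a single query. For this I would either rely on the advanced composition / sparse-vector bound, noting that only ``passes'' cost privacy budget (which is why the quadratic-in-$n$ number of queries is achievable under a mild assumption on the fraction of passing models), or, in the worst case, absorb all $k$ queries uniformly by advanced composition. Once that accounting is nailed down, the rest of the argument is routine: combine the two high-probability events via a union bound (total failure probability $2\beta$), and read off the claimed pass/fail guarantee from the $2\alpha$ margin.
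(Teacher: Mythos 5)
Your high-level plan is essentially the paper's: argue that the Threshold mechanism is an $(\epsilon,\delta)$-DP sparse-vector instance, invoke its accuracy guarantee to control the noisy comparison against $\rho$, invoke a DP transfer theorem (Nissim--Stemmer) to control the gap between the empirical average $\hat t(M_i)=\frac{1}{n}\sum_j t_j(M_i)$ and the population mean $E_{t\sim\mu}[t(M_i)]$ even for adaptively chosen $M_i$, and union-bound the two failure events to get $1-2\beta$ with a $2\alpha$ margin. That is exactly the decomposition in Lemma~\ref{lemma:pass fail} and Lemma~\ref{lemma:Nissim}.

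Where you go astray is in step three. You add a separate Hoeffding step to control ``the sampling'' and then treat yourself as having three independent sources of error, whereas the paper has only two. The transfer theorem already subsumes both the plain sampling fluctuation and the adaptive overfitting into a single bound $\sup_i|\hat t(M_i)-E[t(M_i)]|\le 13\epsilon$ with failure probability $\beta$; there is no residual non-adaptive piece left over for Hoeffding to handle. Worse, applying Hoeffding directly to an adaptively chosen $M_i$ is exactly the move that fails here --- the dependence of $M_i$ on $t_1,\ldots,t_n$ is the obstruction the transfer theorem exists to overcome, so you cannot ``sanity check'' by Hoeffding and then union-bound. Also note that $n=\frac{2}{\epsilon^2}\ln\frac{8}{\delta}$ is not the Hoeffding sample complexity (that would be roughly $\frac{1}{2\epsilon^2}\ln\frac{2}{\delta}$); it is the hypothesis imposed by the transfer theorem itself. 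Finally, your step two says ``every answer the mechanism produces is within $O(\epsilon)$ of the population quantity,'' but Threshold outputs only pass/fail, so there is no real-valued answer to compare; what the transfer theorem controls is the empirical average, and the noisy-comparison accuracy is a separate $\alpha$ (from Dwork--Roth Theorem~3.26, which also supplies the bound on $k$). Dropping the extra Hoeffding step and reassigning the $2\beta$ budget to the two lemmas (noise accuracy and transfer) gives the paper's argument.
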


Theorem~\ref{thm:DP} shows that in order to be able to verify the quality of a product, it is possible to use the same $n$ tests again and again, as long as we do not use them more than $\tilde{O}(n^2)$ times before creating a new set of $n$ tests. Note, that the models to be tested can be selected in an adaptive way. That is, $M_2$ may depend on whether $M_1$ received a ``pass'' or ``fail''  score. 

The proof uses techniques from the field of \emph{Differential Privacy
\cite{dwork2006calibrating}.} We refer the reader to \cite{dwork2014algorithmic}
for a comprehensive introduction to this field. To prove Theorem~\ref{thm:DP}
we first prove two lemmas that will be useful in the proof. Lemma~\ref{lemma:pass fail}
shows that with high probability, the Threshold mechanism will respond
``pass'' and ``fail'' when $\frac{1}{n}\sum_{j}t_{j}\left(M_{i}\right)\lessgtr\rho\pm\alpha$.
Lemma~\ref{lemma:Nissim} shows that with high probability $\left|\frac{1}{n}\sum_{j}t_{j}\left(M_{i}\right)-E_{t\sim\mu}\left[t\left(M_{i}\right)\right]\right|\leq\alpha$
and therefore these two lemmas provide the main tools needed to
prove Theorem~\ref{thm:DP}. 

\begin{lemma}\label{lemma:pass fail} Let $\alpha,\beta,\epsilon,\delta>0$
and assume that the Threshold algorithm is used with parameters such
that it is $\left(\epsilon,\delta\right)$-Differentially private
with a sample of $n$ tests. Then with probability of at least $1-\beta$:
\begin{enumerate}
\item For every model $M_{i}$ such that $\frac{1}{n}\sum_{j}t_{j}\left(M_{i}\right)>\rho+\alpha$
the Threshold algorithm will return ``pass''.
\item For every model $M_{i}$ such that $\frac{1}{n}\sum_{j}t_{j}\left(M_{i}\right)>\rho-\alpha$
the Threshold algorithm will return ``fail''.
\end{enumerate}
as long as the number of tested models is
\[
k=O\left(\frac{n^{2}\alpha^{2}\epsilon^{2}}{\left(\ln\left(n\alpha\epsilon\right)-\frac{1}{2}\ln\left(\beta\ln\frac{1}{\delta}\right)\right)^{2}\left(\ln\frac{1}{\delta}\right)}\right)\,\,\,.
\]

\end{lemma}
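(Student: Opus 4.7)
\medskip

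The plan is to view the Threshold algorithm as a noisy comparator whose output is determined by a randomly perturbed version of $\frac{1}{n}\sum_{j}t_{j}(M_{i})$ being compared against a (possibly randomly perturbed) version of $\rho$, and to reduce the lemma to a high-probability bound on the magnitude of the perturbation. Because each query $M_{i}\mapsto\frac{1}{n}\sum_{j}t_{j}(M_{i})$ has sensitivity $1/n$ on the database of $n$ tests (each $t_{j}(M_{i})\in[0,1]$), the noise scale used by any $(\epsilon,\delta)$-DP instantiation of Threshold must shrink like $1/(n\epsilon)$ up to a factor depending on $k$ and $\ln(1/\delta)$.

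First, I would isolate the ``noise event'' $\mathcal{E}$: the event that every one of the $k$ noisy comparisons performed by Threshold deviates from its noiseless counterpart by at most $\alpha$. Conditional on $\mathcal{E}$, the two claims of the lemma are immediate, since an empirical average exceeding $\rho+\alpha$ (respectively falling below $\rho-\alpha$; I read the second bullet's inequality as a typo for ``$<\rho-\alpha$'') will still lie on the correct side of the noisy threshold, forcing the output ``pass'' (respectively ``fail'').

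Second, I would bound $\Pr[\neg\mathcal{E}]$ using the standard accuracy guarantees for noisy-threshold/AboveThreshold mechanisms in differential privacy (see, e.g., \cite{dwork2014algorithmic}). For $k$ adaptively chosen comparisons answered under $(\epsilon,\delta)$-DP via advanced composition with Gaussian or Laplace noise of sensitivity $1/n$, the per-query noise scale is of order $\sqrt{k\ln(1/\delta)}/(n\epsilon)$, and a union bound over the $k$ queries gives a maximum-error bound of order
\[
\alpha\;=\;O\!\left(\frac{\sqrt{k\,\ln(1/\delta)}\,\ln(k/\beta)}{n\,\epsilon}\right)
\]
with failure probability at most $\beta$. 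Solving this inequality for $k$ produces exactly the expression
\[
k\;=\;O\!\left(\frac{n^{2}\alpha^{2}\epsilon^{2}}{\bigl(\ln(n\alpha\epsilon)-\tfrac{1}{2}\ln(\beta\ln(1/\delta))\bigr)^{2}\,\ln(1/\delta)}\right)
\]
stated in the lemma, after absorbing the implicit $\ln k$ factor in the logarithmic denominator using $\sqrt{k}\asymp n\alpha\epsilon/(\log\cdot\sqrt{\log(1/\delta)})$ and collecting the $\ln\beta$ and $\ln\ln(1/\delta)$ contributions.

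The main obstacle will be the bookkeeping in this inversion: one needs the logarithmic factor inside the denominator to come out exactly as $\ln(n\alpha\epsilon)-\tfrac{1}{2}\ln(\beta\ln(1/\delta))$ rather than a looser $\ln(k/\beta)$. This is a routine but fiddly manipulation: substitute a candidate $k^{\star}\asymp (n\alpha\epsilon)^{2}/[(\text{log factor})^{2}\ln(1/\delta)]$ back into the accuracy inequality, verify that $\ln k^{\star}$ reproduces the claimed log factor up to constants, and absorb lower-order terms. Apart from this calibration, the argument is a direct application of sparse-vector/AboveThreshold accuracy combined with a union bound, together with the observation that Threshold is assumed to be $(\epsilon,\delta)$-DP so its noise scale is already pinned down by the hypothesis.
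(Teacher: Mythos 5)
Your proposal is correct and follows essentially the same route as the paper: the paper's proof simply cites Theorem~3.26 of Dwork and Roth (the accuracy guarantee for the Sparse/AboveThreshold mechanism, which is exactly the ``maximum noise magnitude over $k$ noisy comparisons is $O(\sqrt{k\ln(1/\delta)}\,\ln(k/\beta)/(n\epsilon))$'' bound you re-derive from first principles) and then performs the same algebraic inversion --- substituting $z=\sqrt{2k/\beta}$, reducing to $z\ln z \lesssim y$ for $y = n\alpha\epsilon/\sqrt{\beta\ln(1/\delta)}$, and solving $z=O(y/\ln y)$ --- to arrive at the stated bound on $k$. You also correctly flag the typo in the second bullet (which should read $<\rho-\alpha$), an error that the paper leaves unremarked.
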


\begin{proof} Theorem 3.26 in \cite{dwork2014algorithmic} shows
that the statement of the lemma we are discussing here holds as long
as 
\[
\alpha n\geq\frac{\left(\ln k+\ln\frac{2k}{\beta}\right)\sqrt{512k\ln\frac{1}{\delta}}}{\epsilon}\,\,\,.
\]

Therefore, it holds for
\[
\alpha n\geq\frac{2\ln\frac{2k}{\beta}\sqrt{\left(256\beta\ln\frac{1}{\delta}\right)\frac{2k}{\beta}}}{\epsilon}
\]
or when
\begin{equation}
\frac{\alpha n\epsilon}{32\sqrt{\beta\ln\frac{1}{\delta}}}\geq z\ln z\label{eq:z}
\end{equation}

for $z=\sqrt{\nicefrac{2k}{\beta}}$. (\ref{eq:z}) holds when $z=O\left(\nicefrac{y}{\ln y}\right)$
for $y=\nicefrac{\left(\alpha n\epsilon\right)}{\sqrt{\beta\ln\nicefrac{1}{\delta}}}$.
Therefore,

\begin{align*}
k & =O\left(\beta z^{2}\right)\\
 & =O\left(\beta\frac{y^{2}}{\left(\ln y\right)^{2}}\right)\\
 & =O\left(\frac{n^{2}\alpha^{2}\epsilon^{2}}{\left(\ln\left(n\alpha\epsilon\right)-\frac{1}{2}\ln\left(\beta\ln\frac{1}{\delta}\right)\right)^{2}\left(\ln\frac{1}{\delta}\right)}\right)
\end{align*}
\end{proof}

\begin{lemma}\label{lemma:Nissim} Let $t_{1},\ldots,t_{n}$ be tests
randomly selected from the probability measure $\mu$ for $n=\frac{2}{\epsilon^{2}}\ln\left(\frac{8}{\delta}\right)$.
Assume that the results of these tests are made available through a set
of $k$ queries such that the composition of these responses is $\left(\epsilon,\delta\right)$-Differentially private. If an adversary creates $k$ models $M_{1},\ldots,M_{k}$
based on the responses to the $k$ queries and $k\leq n^{2}$ then
\[
\Pr\left[\sup_{i}\left|\frac{1}{n}\sum_{j}t_{j}\left(M_{i}\right)-E_{t}\left[t\left(M_{i}\right)\right]\right|\geq13\epsilon\right]\leq\frac{8\delta}{\epsilon^{5}}\left(\ln\frac{8}{\delta}\right)^{2}\ln\left(\frac{2}{\epsilon}\right)
\]

\end{lemma}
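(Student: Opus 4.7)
The plan is to recognize Lemma~\ref{lemma:Nissim} as an instance of the generalization theorem for differential privacy in the adaptive data analysis framework. I would view each model $M_i$ as inducing a statistical query $\phi_i(t) := t(M_i) \in [0,1]$ over the ``data set'' $S = (t_1,\ldots,t_n) \sim \mu^n$, so that the empirical answer $\frac{1}{n}\sum_j \phi_i(t_j)$ is to be compared against the population mean $E_{t\sim\mu}[t(M_i)]$. Because the adversary chooses $M_i$ based only on the outputs of an $(\epsilon,\delta)$-DP mechanism, the joint pair $(S,\phi_i)$ inherits the stability that is known to imply generalization.

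The technical core is the transfer theorem of Bassily, Nissim, Smith, Stemmer, Steinke and Ullman (and the closely related monitor argument of Dwork, Feldman, Hardt, Pitassi, Reingold and Roth): if $\mathcal{M}$ is $(\epsilon,\delta)$-DP, then for any bad event $E$ that depends on the sample $S$ and on the output of $\mathcal{M}$,
\[
\Pr_{S,\,M\sim\mathcal{M}(S)}[(S,M)\in E] \;\leq\; e^{\epsilon}\,\Pr_{S,\,S',\,M\sim\mathcal{M}(S')}[(S,M)\in E] \;+\; \delta,
\]
where $S'$ is an i.i.d.\ copy of $S$ independent of $\mathcal{M}$'s randomness. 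In the right-hand probability the query $\phi_M$ is independent of the sample $S$, so Hoeffding's inequality applies directly, and plugging in $n = \frac{2}{\epsilon^2}\ln(8/\delta)$ makes the Hoeffding tail at scale $\Theta(\epsilon)$ well below $\delta$. Thus each individual query's empirical mean concentrates around its population mean up to $O(\epsilon)$ with failure probability of order $\delta$.

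To pass from a single query to the supremum over the $k$ adaptively chosen models, I would apply the monitor technique: augment $\mathcal{M}$ with a post-processing step that, after all $k$ queries have been answered, singles out the worst offender $i^{\star} = \arg\max_i |\hat{\phi}_i(S) - E_{t\sim\mu}[\phi_i(t)]|$. Post-processing preserves $(\epsilon,\delta)$-privacy, so the single-query transfer bound applies to the monitor's output. A union bound over $k \leq n^2$ queries contributes a factor of roughly $n^2 = \Theta(\epsilon^{-4}\log^2(1/\delta))$ to $\delta$, and discretizing the tail levels at resolution $\epsilon$ contributes another $O(1/\epsilon)$, assembling the announced $\frac{8\delta}{\epsilon^5}(\ln(8/\delta))^2 \ln(2/\epsilon)$ factor. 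The constant $13$ in $13\epsilon$ is the slack absorbed by these reductions: the $e^{\epsilon}$ amplification, the Hoeffding scale, and the discretization rounding.

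The main obstacle is bookkeeping constants rather than finding the conceptual path: one must align the various $\epsilon$'s across the Hoeffding bound, the DP transfer, the monitor post-processing, and the discretization; and one must verify that $n = \Theta(\epsilon^{-2}\log(1/\delta))$ with $k \le n^2$ sits inside the regime where the transfer theorem is non-trivial (in particular, $\delta$ small enough relative to $\epsilon$). Once this setup is arranged, no probabilistic ingredient beyond Hoeffding and the post-processing/transfer properties of differential privacy is needed.
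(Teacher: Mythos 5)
Your proposal captures the right structure---treat each model as a DP-chosen statistical query, invoke a single-query DP-generalization (transfer) theorem, then aggregate over the $k$ models---but the paper's proof is much shorter than what you sketch, and one of your steps is redundant in a way that suggests a small confusion. The paper observes that each $M_i$ is produced by post-processing an $(\epsilon,\delta)$-DP transcript (Proposition 2.1 of Dwork--Roth), then cites Theorem 1.2 of Nissim and Stemmer as a black box, which directly gives the per-query tail
\[
\Pr\left[\left|\frac{1}{n}\sum_j t_j(M_i) - E_t[t(M_i)]\right| \geq 13\epsilon\right] \leq \frac{2\delta}{\epsilon}\ln\!\left(\frac{2}{\epsilon}\right)
\]
at the stated $n=\frac{2}{\epsilon^2}\ln(8/\delta)$, and finishes with a plain union bound over the $k\le n^2=\frac{4}{\epsilon^4}\left(\ln\frac{8}{\delta}\right)^2$ models. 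You instead reconstruct the contents of that citation (Hoeffding, the $e^{\epsilon}$ transfer inequality, a discretization at resolution $\epsilon$); that is legitimate but more work than the reference requires. The step that does not fit is invoking the monitor argument \emph{and} a union bound together. The monitor construction---post-processing the DP transcript so that it outputs the single worst-offending index and applying the single-query transfer to that output---exists precisely to avoid paying a union-bound factor of $k$; used correctly it would give a bound with no $k$ dependence at all, which does not match $\frac{8\delta}{\epsilon^5}\left(\ln\frac{8}{\delta}\right)^2\ln\left(\frac{2}{\epsilon}\right)$. The presence of the $k=n^2$ factor in the lemma's tail tells you the proof is the plain union bound over per-query tails, exactly as the paper does; drop the monitor and your plan aligns with it.
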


\begin{proof} Each model $M_{i}$ that the adversary builds can access
the tests $t_{1},\ldots,t_{n}$ only through a set of queries which
are $\left(\epsilon,\delta\right)$-Differentially private. From Proposition
2.1 in \cite{dwork2014algorithmic} it follows that the mechanism
that generates each of the $M_{i}$'s is $\left(\epsilon,\delta\right)$-Differentially private.
Therefore, using Theorem 1.2 in \citet{nissim2015generalization},
for each $i$ it holds that
\[
\Pr\left[\left|\frac{1}{n}\sum_{j}t_{j}\left(M_{i}\right)-E_{t}\left[t\left(M_{i}\right)\right]\right|\geq13\epsilon\right]\leq\frac{2\delta}{\epsilon}\ln\left(\frac{2}{\epsilon}\right)\,\,\,.
\]

Using the union bound:
\[
\Pr\left[\sup_{i}\left|\frac{1}{n}\sum_{j}t_{j}\left(M_{i}\right)-E_{t}\left[t\left(M_{i}\right)\right]\right|\geq13\epsilon\right]\leq\frac{2k\delta}{\epsilon}\ln\left(\frac{2}{\epsilon}\right)\,\,\,.
\]
Using $k\leq n^{2}=\frac{4}{\epsilon^{4}}\left(\ln\frac{8}{\delta}\right)^{2}$
completes the proof.

\end{proof}

\begin{proof} \textbf{of Theorem~} \ref{thm:DP}

Assume that $t_{1},\ldots,t_{n}$ were selected at random. From Lemma~\ref{lemma:pass fail}
it follows that the response of the algorithm will be ``pass'' whenever
$\frac{1}{n}\sum t_{j}\left(M_{i}\right)\geq\rho+\alpha$ and ``fail''
whenever $\frac{1}{n}\sum t_{j}\left(M_{i}\right)\leq\rho-\alpha$
with probability $1-\beta$. Next we would like to show that with
high probability $\sup_{i}\left|\frac{1}{n}\sum t_{j}\left(M_{i}\right)-E_{t}\left[t\left(M_{i}\right)\right]\right|\leq\alpha$
which will complete the proof by using the triangle inequality. The
challenge, however, is that the $M_{i}$'s are selected in an adaptive
way, that is, based on the feedback given on previous model and therefore
they are not independent of $t_{1},\ldots,t_{n}$. 

In order to analyze this case, we will create an even harder situation:
assume that the adversary can first create $k$ models, $\hat{M}_{1},\ldots,\hat{M}_{k}$,
possibly in an adaptive way and obtain the pass-fail signal for each
one of them. Based on the $k$ responses on the models $\hat{M}_{1},\ldots,\hat{M}_{k}$
the adversary has to create a set of $k$ models $M_{1},\ldots,M_{k}$
with the attempt that $\sup_{i}\left|\frac{1}{n}\sum t_{j}\left(M_{i}\right)-E_{t}\left[t\left(M_{i}\right)\right]\right|>\alpha$.
Although this new set of models is not built in an adaptive way, this
adversary is actually more powerful since it can use $M_{i}=\hat{M}_{i}$
if it chooses to do so.

Theorem 3.25 in \cite{dwork2014algorithmic} shows that the Threshold
mechanism is $\left(\epsilon,\delta\right)$-Differentially private.
Therefore, we can apply Lemma~\ref{lemma:Nissim} to conclude
that 
\[
\Pr\left[\sup_{i}\left|\frac{1}{n}\sum_{j}t_{j}\left(M_{i}\right)-E_{t}\left[t\left(M_{i}\right)\right]\right|\geq13\epsilon\right]\leq\frac{8\delta}{\epsilon^{5}}\left(\ln\frac{8}{\delta}\right)^{2}\ln\left(\frac{2}{\epsilon}\right)\,\,\,.
\]

By selecting $\epsilon=\nicefrac{\alpha}{13}$ and $\delta$ such
that 
\[
\beta=\frac{8\delta}{\epsilon^{5}}\left(\ln\frac{8}{\delta}\right)^{2}\ln\left(\frac{2}{\epsilon}\right)
\]
 which means that 
\[
\delta=O\left(\frac{\alpha^{5}\beta}{\left(\ln\frac{1}{\alpha}\right)\left(\ln\left(\frac{\alpha^{5}\beta}{\left(\ln\frac{1}{\alpha}\right)}\right)\right)^{2}}\right)
\]
by the observation that if $x=O\left(y\left(\ln y\right)^{2}\right)$
then $y=O\left(\nicefrac{x}{\left(\ln x\right)^{2}}\right)$.

\end{proof}

\begin{algorithm}
\caption{The Threshold algorithm (modified from the Sparse mechanism of \citet{dwork2014algorithmic})}
\label{alg:Threshold}

\begin{algorithmic}[1]

\Function{Threshold}{$t_1,\ldots,t_n,~\epsilon,~\delta, k,\rho$}
  \State Let $\sigma=\frac{\sqrt{32k\ln\frac{1}{\delta}}}{\epsilon}$
  \For {$i=1,\ldots,k$}
    \State Receive a model $M_i$
    \State Let $s_i = \sum_j t_j\left(M_i\right)$
    \State Sample $r_i = \mbox{Lap}(\sigma) + \mbox{Lap}(2\sigma)$ where Lap is the Laplace distribution
    \If {$s_i+r_i > \rho$}
      \State {\bf Output} "pass"
    \Else
      \State {\bf Output} "fail"
    \EndIf
  \EndFor
\EndFunction

\end{algorithmic}

\end{algorithm}

\section{Experiment}

In Section~\ref{sec:prevent_with_mpc} we explained that using secure Multi-Party Computation it is possible to test a model while keeping both the model hidden from the test and the test hidden from the model. In Theorem~\ref{thm:DP} we showed that this can allow us to reuse the test many times. However, these results do not show that this technology is feasible in terms of computation time on currently available hardware. In this section we describe an experiment we conducted to confirm the feasibility of the approach.

For the experiment we simulated the following scenario: a customer would like to get a loan from a lender. To do that, the customer may apply to several lenders to shop for the best rates. Each lender computes the interest rate it is willing to offer by taking in the financial information of the customer and using historic data to compute the ROI it expects. The lender is not interested in revealing its ``formula'' for computing the requested interest rate due to two reasons. First, it is its intellectual property that may give it an edge over the competition. But more relevant to the theme of this work is that the assessment of the risk of lending the money is based on limited view of the customer. Therefore, if the customer had a full description of the way he or she are being assessed, they may be able to create a presentation that will work in their favor, for example by changing the way their credit is distributed without really changing the amount of money they owe, or issuing the request for the loan on a date that will maximize their benefits because of arbitrary cutoff dates in the evaluation model of the lender. At the same time the customer may be reluctant from providing its financial information to many lending agencies to shop for the best rates given the risk that even just one of them will leak this information.

Therefore, we simulated the scenario in which the lender has a model to predict the ROI for an offer for a customer based on the financial information of the customer such that the evaluation is done using MPC in a way that the customer does not have to reveal its private information and the lender does not have to reveal its model. To build the model we used data from Lending Club.\footnote{\url{https://www.lendingclub.com/}} We have used Gradient Boosting \cite{friedman2001greedy} to train a model consisting of 32 trees, each one of these trees having 16 leaf nodes to predict the logarithm of the ratio between the amount of money returned to the amount of credit given. Therefore, this value is positive whenever the return was greater than the investment and negative when the customer defaulted and did not return the entire amount requested. The information about the customer consisted of 48 attributes.\footnote{More information about the data used for this experiment can be found at \url{https://www.lendingclub.com/info/download-data.action}.}

The MPC protocol was implemented using the protocol of \citet{LN17} and its implementation in the SCAPI library.\footnote{\url{https://github.com/cryptobiu/libscapi} from Dec 2017.} This protocol is secure in the presence of \emph{malicious adversaries}, utilizes three servers and is secure as long as at most one is corrupt. The evaluation used 3 servers of type c4.2xlarge on Amazon Web Services (AWS) with 1GB interconnects. The total time to apply the model to customers data was measured at $15.4$ seconds of which $1$ second was used for setup, another $0.9$ seconds for offline computation and $13.5$ seconds for interaction between the different parties. Furthermore, the amount of communication exchanged between each player and the other players was $\sim125$MB.

This experiment shows that the process of testing a model, while keeping both the model private and the test private can be done in a matter of seconds. Obviously, the exact timing may vary by the size of the model and the size of the data.  However, the time changes linearly with respect to these parameters and therefore, it is still feasible for many of the applications we are interested at in this work. Furthermore, this experiment does not simulate the Threshold mechanism (Algorithm~\ref{alg:Threshold}) however adding the randomness required by this mechanism should not introduce noticeable difference in the performance.

\section{Conclusions}

The main observation we make in this study is that in asymmetric markets, the lack of information of information may be a consequence of a computational problem, rather than an information theoretic bound. This allows us to suggest computational methods to address this lack of information. We show that this problem of asymmetric markets, that was originally presented by \citet{akerlof1970market} in the context of used cars, appears also in markets of AI based goods. Therefore, we study ways to reduce the amount of missing information by proposing the use of secure multi-party computation to make sure that the item being evaluated cannot use the inherent limitations of the evaluation method to obtain an evaluation which does not truly reflect its true value. We show that these tools are feasible and can be applied to problems of relevant size. Furthermore, we use Differential-Privacy techniques to analyze the proposed method and show that it provides accurate estimate of the quality, even if the test is applied multiple times and the seller can adapt its good trying to get higher scores by using the feedback from previous runs.

Given that asymmetric markets are common in many fields of life and given the growth of AI and AI based goods becoming a commodity, we think that there is great value in studying these markets and proposing new methods for players to have methods to evaluate products in ways that will better reflect their true values.

\bibliographystyle{ACM-Reference-Format}

\bibliography{ComplexConditions}

\end{document}